\newtheorem{theorem}{Theorem}
\newtheorem{cor}{Corollary}
\newtheorem{definition}{Definition}
\newtheorem{lem}{Lemma}
\newtheorem{conjecture}{Conjecture}
\DeclarePairedDelimiter\floor{\lfloor}{\rfloor}
\newcommand{\beq}{\begin{eqnarray}}
\newcommand{\eeq}{\end{eqnarray}}
\newcommand{\B}{\mathcal{B}}
\newcommand{\tB}{t\mathcal{B}}
\begin{document}
\title{A generalization of the CHSH inequality self-testing maximally entangled states of any local dimension}

\author{Andrea Coladangelo}
\date{}

\affil{Computing and Mathematical Sciences,	
Caltech\\ 	
{acoladan@caltech.edu}
}

\maketitle

\abstract{For every $d \geq 2$, we present a generalization of the CHSH inequality with the property that maximal violation self-tests the maximally entangled state of local dimension $d$. This is the first example of a family of inequalities with this property. Moreover, we provide a conjecture for a family of inequalities generalizing the tilted CHSH inequalities, and we conjecture that for each pure bipartite entangled state there is an inequality in the family whose maximal violation self-tests it. All of these inequalities are inspired by the self-testing correlations of \cite{CGS17}.}


\section{Introduction}
The CHSH inequality \cite{CHSH69} is one of the most well-studied witnesses of Bell's theorem \cite{Bell1964}, separating classical from quantum correlations. It is well-known \cite{Werner88, Popescu92} that maximal violation of the CHSH inequality can be achieved uniquely by measuring a maximally entangled pair of qubits. This kind of characterization is often referred to in the literature as self-testing \cite{Mayers2004}.

In this note, for any $d \geq 2$, we present a generalization of the CHSH inequality with the property that maximal violation self-tests the maximally entangled state of local dimension $d$. This is the first example of a family of Bell inequalities (or non-local games) for which this property holds for any $d \geq 2$. Previously, through various results in parallel self-testing, we knew of families self-testing maximally entangled states of local dimension $d$ a power of $2$ \cite{Mckague2016, Coladangelo2016, Coudron2016, CRSV2016, Natarajan2016}, or $d^n$ for any $d$ and for any $n\geq 2$ even \cite{CS2017b}. Another family of Bell inequalities, indexed by $d\geq 2$, was proposed by Salavrakos et al. \cite{Salavrakos17}. There, the authors show that maximal violation is achieved by the maximally entangled state of local dimension $d$, but the self-testing property is only conjectured (although numerical evidence is given for the case $d=3$).
Ours is not the first generalization of the CHSH inequality (or the CHSH game): a more natural algebraic generalization of the CHSH game over fields of order $q$ was introduced by Buhrman and Massar \cite{Buhrman05}, and studied by Bavarian and Shor \cite{Bavarian15}; another generalization was introduced by Tavakoli et al. and studied in the context of random access codes \cite{Tavakoli15}. However, the self-testing properties of these generalizations are not known. Our generalization is unrelated to these, and in this note we focus on establishing its self-testing properties. 

Our Bell inequality is inspired directly by \cite{CGS17}. There, the authors show that for each pure bipartite entangled state there exists a quantum correlation that self-tests it. Here, our aim is to phrase this self-test in terms of maximal violation of some Bell inequality (instead of in terms of a correlation). In other words, we wish to find, for each such correlation, a Bell inequality whose maximal quantum violation is attained exclusively at that correlation, i.e. a hyperplane tangent to the set of quantum correlations only at that self-testing point. We succeed in the maximally entangled case, and this yields a generalization of the CHSH inequality. We also provide loose analytical bounds on the robustness of our self-test.

The approach generalizes naturally also to the tilted case, and we present a candidate family of Bell inequalities generalizing the family of tilted CHSH inequalities \cite{Acin12}. However, in the tilted case, the lack of symmetry seems to make the analysis surprisingly more complicated, and we can only conjecture that for each pure bipartite entangled state of any local dimension there is a corresponding inequality in the family whose maximal violation self-tests it.

The Bell operator for the maximally entangled case is presented in section \ref{sec: main}, and concisely stated in Definition \ref{def: bell-operators}. The quantum bound and self-testing results are stated in Theorems \ref{thm: quantum-bound} and \ref{thm: exact self-testing}. Section \ref{sec: conj} presents the candidate family for the tilted case, and the corresponding conjecture. In Appendix \ref{sec: appendix}, we give a robust version of our self-test for the maximally entangled case (Theorem \ref{thm: robust self test}).

\subsection{Notation and preliminaries}
\paragraph{Correlations and strategies}
Given sets $\mathcal{X}$,$\mathcal{Y}$, $\mathcal{A}$, $\mathcal{B}$, a (bipartite) \textit{correlation} is a collection of conditional probability distributions $\{p(a,b|x,y): a\in \mathcal{A}, b \in \mathcal{B}\}_{(x,y)\in \mathcal{X}\times \mathcal{Y}}$. $\mathcal{X}$ and $\mathcal{Y}$ are referred to as the question sets, while $\mathcal{A}$ and $\mathcal{B}$ as the answer sets.

Given question sets and answer sets $\mathcal{X}$, $\mathcal{Y}$, $\mathcal{A}$, $\mathcal{B}$, a \textit{classical strategy} is specified by an integer $k$, a probability distribution $\{\lambda_i\}$ on $\{1,..,k\}$, a probability distribution $\{p_{x,i}^{a}\}$ on $\mathcal{A}$ for each $x \in \mathcal{X}$ and $ 1\leq i \leq k$, and a probability distribution $\{q_{y,i}^{b}\}$ on $\mathcal{B}$ for each $y \in \mathcal{Y}$ and $ 1\leq i \leq k$. It produces the correlation $p$ such that
$$p(a,b|x,y) = \sum_{i=1}^{k} \lambda_i p_{x,i}^a q_{y,i}^b \,\,\,\,\forall a \in \mathcal{A}, b \in \mathcal{B}, x \in \mathcal{X}, y \in \mathcal{Y}.$$

Given question sets and answer sets $\mathcal{X}$, $\mathcal{Y}$, $\mathcal{A}$, $\mathcal{B}$,
a \textit{quantum strategy} is specified by Hilbert spaces $\mathcal{H}_A$ and  $\mathcal{H}_B$, a pure state $\ket{\psi} \in \mathcal{H}_A \otimes \mathcal{H}_B$, and projective measurements $\{\Pi^a_{A_x}\}_a$ on $\mathcal{H}_A$, $\{\Pi^b_{B_y}\}_b$ on $\mathcal{H}_B$, for $x \in \mathcal{X}, y \in \mathcal{Y}$. It produces the correlation $p$ such that $$p(a,b|x,y) = \bra{\psi} \Pi_{A_x}^a \otimes \Pi_{B_y}^b \ket{\psi} \,\,\,\forall a \in \mathcal{A}, b \in \mathcal{B}, x \in \mathcal{X}, y \in \mathcal{Y}.$$
Concisely, we refer to a quantum strategy as a triple $\left(\ket{\psi}, \{\Pi^a_{A_x}\}_a, \{\Pi^b_{B_y}\}_b \right)$. We take the measurements to be projective without loss of generality by appealing to Naimark's dilation theorem. We will sometimes describe a quantum strategy by specifying an observable for each question. The observables in turn specify the projectors through their eigenspaces. 

A correlation is said to be classical (quantum) if there exists a classical (quantum) strategy producing it. We denote by $\mathcal{C}_c^{m,n,r,s}$ and $\mathcal{C}_q^{m,n,r,s}$ respectively the sets of classical and quantum correlations on question sets of sizes $m, n$ and answer sets of sizes $r,s$. 

\paragraph{Self-testing} We define self-testing formally:
\begin{definition}[Self-testing]
We say that a correlation $\{p^*(a,b|x,y): a \in \mathcal{A}, b \in \mathcal{B}\}_{x \in \mathcal{X}, y \in \mathcal{Y}}$ self-tests a strategy $\left(\ket{\Psi}, \{\tilde{\Pi}_{A_x}^{a}\}_{a}, \{\tilde{\Pi}_{B_y}^{b}\}_{b} \right)$ if, for any strategy $\left(\ket{\psi}, \{\Pi_{A_x}^a\}_a, \{\Pi_{B_y}^b\}_b \right)$ achieving $p^*$, there exists a local isometry $\Phi= \Phi_A\otimes\Phi_B$ and an auxiliary state $\ket{aux}$ such that 
\begin{align}
\Phi(\ket{\psi}) &=  \ket{\Psi} \otimes \ket{\textit{aux}} \label{eq: state}\\ \label{eq: measurements}
\Phi(\Pi_{A_x}^{a}\otimes \Pi_{B_y}^{b}\ket{\psi}) &=  \tilde{\Pi}_{A_x}^{a}\otimes \tilde{\Pi}_{B_y}^{b} \ket{\Psi} \otimes \ket{\textit{aux}}  \,\,\, \forall a \in \mathcal{A} ,b \in \mathcal{B}, x \in \mathcal{X}, y \in \mathcal{Y}
\end{align}
\end{definition}
Sometimes, we refer to \textit{self-testing of the state} when we are only concerned with the guarantee of equation \eqref{eq: state}, and not \eqref{eq: measurements}. Moreover, we will informally say that ``maximal violation of an inequality self-tests a state'' to mean precisely that any correlation achieving maximal violation self-tests the state. There is also a notion of \textit{robust} self-testing, when one can approximately characterize strategies that are close to achieving the ideal correlation \cite{MYS12, Kaniewski16}. For a precise definition, we refer the reader to \cite{CS17}. We remark that, technically, we don't need to assume that the original strategy uses a pure state $\ket{\psi}$, but rather our proofs can be directly translated to the case of a mixed state (see \cite{CGS17} for a more precise account of this).

\paragraph{The family of tilted CHSH inequalities} We introduce the family of tilted CHSH inequalities \cite{Acin12}. Let $A_0, A_1, B_0, B_1$ be $\pm 1$-valued random variables. For a random variable $X$, let $\left<X\right>$ denote its expectation. The tilted CHSH inequality \cite{Acin12}, with parameter $\alpha \in [0,2]$, is the following generalization of the CHSH inequality:
\begin{equation}
    \left<\alpha A_0 + A_0B_0 + A_0B_1 +A_1B_0 - A_1B_1 \right > \leq 2+\alpha,
    \label{tiltedchsh}
\end{equation}
which holds when the random variables are local. The maximal quantum violation is $I_{\alpha} := \sqrt{8+2\alpha^2}$ and is attained when the strategy of the two parties consists of sharing the joint state $\ket{\psi} = \cos \theta \ket{00} + \sin \theta \ket{11}$, and measuring observables $A_0, A_1$ and $B_0, B_1$ respectively, where $A_0 = \sigma_z$, $A_1 = \sigma_x$, $B_0 = \cos \mu \sigma_Z + \sin \mu \sigma_X$ and $B_1 = \cos \mu \sigma_z + \sin \mu \sigma_x$, and $\sin 2\theta = \sqrt{\frac{4-\alpha^2}{4+\alpha^2}}$ (or $\alpha \equiv \alpha(\theta) = 2/\sqrt{1+2\tan^2{2\theta}}$) and $\mu = \arctan \sin 2\theta$. Here $\sigma_Z$ and $\sigma_X$ are the usual Pauli matrices. The converse also holds, in the sense that maximal violation self-tests this strategy. 



\section{The Bell inequality}
\label{sec: main}

The family of Bell inequalities that we are about to introduce is over question sets $\mathcal{X} = \{0,1,2\}$ and $\mathcal{Y} = \{0,1,2,3\}$, and answer sets $\mathcal{A} = \mathcal{B} = \{0,..,d-1\}$ (where $d \geq 2$ corresponds to the local dimension). We introduce some notation. For a correlation $p \in \mathcal{C}_q^{3,4,d,d}$ and $m \in  \{0,1,..\floor{\frac{d}{2}}-1\}$, define
\begin{equation}
\label{eq: chsh functional}
 [\mbox{CHSH}_m]_p := \sum_{x,y \in \{0,1\}, a,b \in \{2m,2m+1\}} (-1)^{a \oplus b - xy} p(a,b|x,y)   
\end{equation}
where $a \oplus b - xy$ is intended modulo $2$. Note that for $m=0$, this is the usual CHSH Bell functional. For $m>0$ the form is the same, but the answers are in $\{2m,2m+1\}$. In what follows, we will use the term ``standard CHSH'' to refer to the standard CHSH inequality or Bell functional on binary question and answer sets. This is to distinguish it from the new functionals we have just defined. We will also use the terms Bell operator and Bell functional interchangeably. 

We can define a similar functional to \eqref{eq: chsh functional} for questions $x \in \{0,2\}$ and $y \in \{2,3\}$ and answers in $\{2m+1, 2m+2\}$. Here questions $x \in \{0,2\}$ and $y \in \{2,3\}$ take the role of the $\{0,1\}$ questions in \eqref{eq: chsh functional}. So, for convenience of notation define a relabelling map $f: \{0,2\} \rightarrow \{0,1\}$ to be such that $f(0) = 0, f(2) = 1$, and a relabelling map $g: \{2,3\} \rightarrow \{0,1\}$ to be such that $g(2) = 0$, $g(3) = 1$. Then, define
\begin{equation*}
 [\mbox{CHSH}'_m]_p := \sum_{x \in \{0,2\}, y \in \{2,3\}, a,b \in \{2m+1,2m+2\}} (-1)^{a \oplus b - f(x)g(y)} p(a \mbox{ mod } d,b \mbox{ mod }d|x,y)   
\end{equation*}
From now onwards, we omit writing ``$\mbox{mod } d$'' for ease of notation, and the answers are intended $\mbox{mod } d$.

Denote by $\mathcal{C}$ and $\mathcal{C}'$ the sets
\begin{align}
\mathcal{C} &= \left \{(a,b,x,y): (x,y) \in \{0,1\} \times \{0,1\} \wedge (a,b) \notin \bigcup_{m=0}^{\floor{\frac{d}{2}}-1} \{2m, 2m+1\} \times \{2m,2m+1\} \right\}, \label{eq: C}\\
\mathcal{C}' &= \left \{(a,b,x,y):  (x,y) \in \{0,2\} \times \{2,3\} \wedge (a,b) \notin \bigcup_{m=0}^{\floor{\frac{d}{2}}-1} \{2m+1, 2m+2\} \times \{2m+1,2m+2\} \right\}. \label{eq: C prime}
\end{align}
Then, define the cross terms
\begin{align*}
[\mbox{CROSS}]_p &:= \sum_{a,b,x,y: (a,b,x,y) \in \mathcal{C}} p(a,b|x,y), \\
[\mbox{CROSS}']_p &:= \sum_{a,b,x,y: (a,b,x,y) \in \mathcal{C}'} p(a,b|x,y).
\end{align*}

We are ready to define the family of Bell operators for our inequalities.
\begin{definition}[The Bell operator]\label{def: bell-operators}
Let $d \geq 2 \in \mathbb{Z}$ and $\mathds{1}_{\{d>2\}}$ and $\mathds{1}_{\{d\,\,\small{ odd}\}}$ be the indicator functions for the cases $d>2$ and $d$ odd respectively. Let $\delta>0$ be a constant. For a correlation $p$, the Bell operator takes the form:
\begin{align}
[\B]_p &:= \sum_{m=0}^{\floor{\frac{d}{2}}-1} [\textnormal{CHSH}_m]_p +  \mathds{1}_{\{d>2\}} \cdot \sum_{m=0}^{\floor{\frac{d}{2}}-1} [\textnormal{CHSH}'_m]_p -\delta \cdot ([\textnormal{CROSS}]_p + [\textnormal{CROSS}']_p) \nonumber\\
&+ \mathds{1}_{\{d\,\,\small{ odd}\}} \cdot \frac{\sqrt{2}}{2} \cdot \left(\sum_{x,y \in \{0,1\}} p(d-1, d-1| x,y) + \sum_{x \in \{0,2\}, y \in \{2,3\}} p(0, 0 | x,y) \right) \label{eq: max bell operator}
\end{align}

\end{definition}

Intuitively the terms $\mbox{CROSS}$ and $\mbox{CROSS}'$ can be thought of as ``penalty'' terms: they are meant to enforce that any correlation maximizing the value of the Bell operator, must put zero probability mass on the cross terms from $\mathcal{C}$ and $\mathcal{C}'$. We will argue that it is enough to multiply these penalty terms by any arbitrarily small but positive constant $\delta$ to ensure that maximal violation is attained exclusively by the maximally entangled state. On the other hand, with a zero penalty, it is still the case that the corresponding Bell inequality can be maximally violated using a maximally entangled state, but we are unable to show that the self-testing result still holds true (i.e. the converse).


\begin{theorem}[Classical bound]\label{thm: classical-bound}
For any $d \geq 2$ and any $p \in \mathcal{C}_c^{3,4,d,d}$:
    $$[\B]_p \leq 2 \cdot (1+ \mathds{1}_{\{d>2\}})$$
\end{theorem}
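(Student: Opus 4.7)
The plan is to reduce to deterministic strategies by convexity: $[\B]_p$ is linear in $p$, and the classical correlation polytope $\mathcal{C}_c^{3,4,d,d}$ is the convex hull of its deterministic vertices, so it suffices to bound $[\B]_p$ on a deterministic strategy specified by Alice's fixed answers $a_0,a_1,a_2$ and Bob's fixed answers $b_0,b_1,b_2,b_3$.

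The Bell operator naturally splits into two blocks that share only Alice's answer $a_0$: the first collects $\sum_m[\textnormal{CHSH}_m]_p-\epsilon[\textnormal{CROSS}]_p$ together with (for $d$ odd) the extra contribution $\frac{\sqrt{2}}{2}\sum_{x,y\in\{0,1\}}p(d-1,d-1|x,y)$, supported on $(x,y)\in\{0,1\}^2$, while the second is the analogous expression on $(x,y)\in\{0,2\}\times\{2,3\}$ with answer pairs $\{2m+1,2m+2\}$ and (for $d$ odd) leftover value $0$. The goal is to show each block contributes at most $2$, yielding $[\B]_p\leq 2(1+\mathds{1}_{\{d>2\}})$. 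For each block, classify every deterministic answer by the pair $\{2m,2m+1\}$ (resp.\ $\{2m+1,2m+2\}$) it lies in, or mark it as ``leftover'' if $d$ is odd and the answer equals the block's leftover value. For each question pair $(x,y)$, either $(a_x,b_y)$ shares a pair index $m$ (contributing $\pm 1$ to $[\textnormal{CHSH}_m]$) or it does not (contributing $-\epsilon$ to CROSS). A short combinatorial analysis shows that the number $k$ of matched question pairs must lie in $\{0,1,2,4\}$---the case $k=3$ is ruled out because the matching condition $m_{a_x}=m_{b_y}$ factorizes separately in $x$ and $y$---and in the $k=4$ case all four answers lie in a single pair, reducing $\sum_m[\textnormal{CHSH}_m]$ to a standard CHSH functional with classical bound $2$. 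In all remaining cases the $\pm 1$ and CROSS contributions combine to at most $2$.

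The delicate step is the $d$-odd case, where the extra term adds $\frac{\sqrt{2}}{2}$ for every question pair whose answers both equal the block's leftover value. Such pairs necessarily have mismatched pair indices (the leftover is in no pair) and therefore automatically pay $-\epsilon$ to CROSS; moreover Alice's unique answer $a_0$ cannot simultaneously be the leftover value of both blocks (since $d-1\neq 0$ for $d\geq 3$), so the two blocks' extra contributions cannot both be maximal. A careful joint accounting---charging each leftover-leftover pair the net $\frac{\sqrt{2}}{2}-\epsilon$ against its own CROSS penalty while bounding matched pairs via the CHSH inequality---yields the per-block bound of $2$, hence $[\B]_p\leq 2(1+\mathds{1}_{\{d>2\}})$. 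I expect the main obstacle to be precisely this combined CHSH / CROSS / extra-term bookkeeping in the odd-dimensional case, which is the only regime in which the three ingredients of the Bell functional all interact non-trivially.
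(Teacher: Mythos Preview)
For even $d$ your argument is correct and is essentially the paper's own proof: reduce to deterministic strategies, split the operator into the $\{0,1\}^2$ block and the $\{0,2\}\times\{2,3\}$ block, and on each block classify the four answers by the pair $\{2m,2m+1\}$ they fall into; your observation that the number $k$ of matched question pairs can only be $0,1,2$ or $4$ (with $k=4$ forcing a genuine two-outcome CHSH, classically $\le 2$, and $k\le 2$ giving at most $k-\epsilon(4-k)\le 2$) is exactly the case analysis the paper carries out, only stated a bit more systematically.

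The odd-$d$ part, however, has a real gap, and it is not a bookkeeping issue but a structural one: your claimed per-block bound of $2$ is simply false for small $\epsilon$. Take the deterministic strategy $a_0=a_1=b_0=b_1=d-1$. Since $d-1$ lies in none of the block-1 pairs $\{2m,2m+1\}$, every $\textnormal{CHSH}_m$ term vanishes, all four question pairs land in $\textnormal{CROSS}$, and all four contribute to the extra term, so the block-1 value is
\[
\frac{\sqrt{2}}{2}\cdot 4-\epsilon\cdot 4=2\sqrt{2}-4\epsilon,
\]
which exceeds $2$ whenever $\epsilon<\tfrac{\sqrt{2}-1}{2}$. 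Your accounting ``charge each leftover--leftover pair the net $\tfrac{\sqrt{2}}{2}-\epsilon$'' makes this explicit: four such pairs already cost $2\sqrt{2}-4\epsilon>2$. Worse, since $a_0=d-1$ lies in the block-2 pair $\{d-2,d-1\}$, one can complete the strategy with $a_2=b_2=b_3=d-1$ and obtain block-2 value $2$ with no $\textnormal{CROSS}'$ penalty and no block-2 extra term, for a total
\[
[\B]_p \;=\; 2+2\sqrt{2}-4\epsilon \;>\; 4.
\]
So for odd $d$ and sufficiently small $\epsilon$ the bound $2(1+\mathds{1}_{\{d>2\}})$ fails outright. (The paper's own proof silently parametrizes every answer as lying in some pair $\{2k,2k+1\}$, i.e.\ it treats only the even case despite the theorem being stated for all $d\ge 2$; the quantum-bound theorem in the same paper is, tellingly, stated for even $d$ only.) If you want to rescue the odd case you will need either a different bound depending on $\epsilon$ or an additional constraint on $\epsilon$; the argument as written cannot close.
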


\begin{proof}
For $d=2$ we recover the classical case of the standard CHSH inequality, so assume $d>2$ from now on. Finding the best classical strategy is equivalent to finding the best deterministic strategy. Let $f_A:\{0,1,2\}\rightarrow \{0,..,d-1\}$ and $f_B:\{0,1,2,3\}\rightarrow \{0,..,d-1\}$ be functions specifying a deterministic strategy. Now, suppose $f_A(0) \in \{2k,2k+1\}$, $f_A(1) \in \{2l,2l+1\}$ and $f_A(2) \in \{2l', 2l'+1\}$. 

\begin{itemize}
    \item If $k=l$, It's easy to see that the best choice for $f_B(0)$ and $f_B(1)$ is to have also $f_B(0), f_B(1) \in \{2k,2k+1\}$ and get a contribution of at most $2$ (this is from the standard CHSH classical bound)
    \item if $k \neq l$, it's also easy to see that the best choice for $f_B(0)$ and $f_B(1)$ is to have one of three possibilities: $f_B(0), f_B(1) \in \{2k, 2k+1\}$; $f_B(0), f_B(1) \in \{2l, 2l+1\}$; or one in $\{2k, 2k+1\}$ and the other in $\{2l,2l+1\}$. They all achieve a contribution of at most $2$.
\end{itemize}

Similarly, the best possible choice for $f_B(2)$ and $f_B(3)$ gives a contribution of $2$. This yields the desired bound.
\end{proof}

We turn to quantum correlations. We have the following two theorems:

\begin{theorem}[Quantum bound]\label{thm: quantum-bound}
For any $d$ even and any $p \in \mathcal{C}_q^{3,4,d,d}$:

\begin{equation}
[\B]_p \leq 2\sqrt{2} \cdot (1 + \mathds{1}_{\{d>2\}}).  \label{eq:7}
\end{equation}

\end{theorem}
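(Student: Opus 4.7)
The plan is to reduce to two applications of Tsirelson's bound. Since $\textnormal{CROSS}$ and $\textnormal{CROSS}'$ are sums of probabilities, the penalty $-\epsilon([\textnormal{CROSS}]_p + [\textnormal{CROSS}']_p)$ is non-positive, and for even $d$ the odd-$d$ correction vanishes. Thus it suffices to establish the two inequalities
$$\sum_{m=0}^{d/2-1}[\textnormal{CHSH}_m]_p \leq 2\sqrt{2} \qquad \text{and} \qquad \sum_{m=0}^{d/2-1}[\textnormal{CHSH}'_m]_p \leq 2\sqrt{2},$$
so that, since the second sum appears in $[\B]_p$ only when $d>2$, adding them yields the target $2\sqrt{2}(1+\mathds{1}_{\{d>2\}})$.

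The core idea is a sign-averaging trick that rewrites each sum as the uniform average, over $s = (s_0,\ldots,s_{d/2-1}) \in \{\pm 1\}^{d/2}$, of a genuine CHSH expectation on $\pm 1$ observables. For the first sum, define
$$A_x^{(s)} := \sum_{m=0}^{d/2-1} s_m\bigl(\Pi_{A_x}^{2m} - \Pi_{A_x}^{2m+1}\bigr), \qquad B_y^{(s)} := \sum_{m=0}^{d/2-1} s_m\bigl(\Pi_{B_y}^{2m} - \Pi_{B_y}^{2m+1}\bigr).$$
When $d$ is even, the pairs $\{2m,2m+1\}_{m=0}^{d/2-1}$ partition $\{0,\ldots,d-1\}$, so the corresponding projectors resolve the identity and $(A_x^{(s)})^2 = (B_y^{(s)})^2 = \mathds{1}$; hence these are Hermitian $\pm 1$ observables for every $s$. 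By Tsirelson's inequality,
$$\bigl\langle \psi \bigl| A_0^{(s)} B_0^{(s)} + A_0^{(s)} B_1^{(s)} + A_1^{(s)} B_0^{(s)} - A_1^{(s)} B_1^{(s)} \bigr| \psi \bigr\rangle \leq 2\sqrt{2}$$
for every $s$.

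Now expand $\langle \psi | A_x^{(s)} \otimes B_y^{(s)} | \psi \rangle$ as a double sum over block indices $(m,m')$ and use that $(-1)^{a \oplus b} = (-1)^{a+b}$ (both agree mod $2$), so the CHSH sign structure $(-1)^{a\oplus b - xy}$ inside $[\textnormal{CHSH}_m]_p$ matches what appears after expansion. Taking the uniform expectation over $s$, the cross-block contributions with $m \neq m'$ vanish because $\mathbb{E}_s[s_m s_{m'}] = \delta_{mm'}$, while the diagonal contributions combine to give exactly $\sum_m [\textnormal{CHSH}_m]_p$. Averaging the per-$s$ Tsirelson bound over $s$ therefore gives the first desired inequality.

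The primed sum is handled identically: for even $d$, the pairs $\{2m+1, (2m+2) \bmod d\}_{m=0}^{d/2-1}$ form a cyclic shift of the previous partition and still partition $\{0,\ldots,d-1\}$, so one can define analogous $\pm 1$ observables on Alice's question $x \in \{0,2\}$ and Bob's question $y \in \{2,3\}$ (with the relabelings $f,g$ taking the role of the binary question labels in CHSH) and repeat the same averaging argument. The only non-routine step --- and therefore the main potential obstacle --- is spotting the sign-averaging construction; once it is written down, verifying the identity and invoking Tsirelson's bound are essentially mechanical.
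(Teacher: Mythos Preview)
Your proof is correct, and the core construction—coarse-graining the $d$ outcomes into $\pm1$ observables by assigning a sign $s_m$ to each block $\{2m,2m+1\}$ and then invoking Tsirelson—is exactly the idea the paper uses inside the proof of Lemma~\ref{lem: key}. The difference is organizational and in how the cross-block contributions are handled. The paper argues by a greedy choice: it picks the bits $\vec{o}[m]$ one at a time so that the cumulative cross contribution $C$ is non-negative, yielding a single $\vec{o}$ with $[\textnormal{CHSH}]_{\tilde p^{(\vec o)}}\geq \sum_m[\textnormal{CHSH}_m]_p$; this is then fed into a contradiction argument (Lemma~\ref{lem: key}) to show maximizers have zero cross terms, after which Lemma~\ref{lem: weights} and Corollary~\ref{cor: 1} give the bound. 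You instead average uniformly over all sign strings $s$, using $\mathbb{E}_s[s_m s_{m'}]=\delta_{mm'}$ to annihilate the cross-block terms exactly, which gives the inequality $\sum_m[\textnormal{CHSH}_m]_p\le 2\sqrt{2}$ directly for \emph{every} quantum correlation, not just maximizers. Your route is shorter for the bare upper bound; the paper's detour through Lemmas~\ref{lem: key}--\ref{lem: weights} is doing double duty, since those structural lemmas are needed anyway for the self-testing statement (Theorem~\ref{thm: exact self-testing}).
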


\begin{theorem}[Exact self-testing]\label{thm: exact self-testing}
For any $d \geq 2$, there is a unique correlation which achieves the quantum bound of $\B$, and it self-tests the state $\ket{\Psi} = \frac{1}{\sqrt{d}}\sum_{i=0}^{d-1} \ket{ii}$.
\end{theorem}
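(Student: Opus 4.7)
The plan is to show that saturating the quantum bound forces a decomposition of the strategy into standard-CHSH strategies on a family of two-dimensional ``blocks'' of the answer space, which are then glued along the shared question $x=0$ into a single $d$-dimensional maximally entangled state.

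First, I would leverage the proof of Theorem~\ref{thm: quantum-bound} (which I expect to proceed by a sum-of-squares argument) to extract the conditions necessary for any $p$ to saturate $[\mathcal{B}]_p = 2\sqrt{2}(1+\mathds{1}_{\{d>2\}})$. Since each $[\textnormal{CHSH}_m]_p$ and each $[\textnormal{CHSH}'_m]_p$ is individually bounded by the Tsirelson value $2\sqrt{2}$, and the penalty coefficient $\epsilon>0$ is strictly positive, saturation requires $[\textnormal{CHSH}_m]_p = [\textnormal{CHSH}'_m]_p = 2\sqrt{2}$ for every $m$, $[\textnormal{CROSS}]_p = [\textnormal{CROSS}']_p = 0$, and (for odd $d$) the corner-term probabilities at their maximal compatible value. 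This already pins down the ideal correlation uniquely, since every ``building block'' must attain its own tight bound.

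Next, define block projectors $R^A_{x,m} := \Pi^{2m}_{A_x}+\Pi^{2m+1}_{A_x}$ on Alice's side for $x\in\{0,1\}$, and a primed analogue ${R'}^A_{x,m} := \Pi^{2m+1}_{A_x}+\Pi^{2m+2}_{A_x}$ for $x\in\{0,2\}$, with similar definitions on Bob's side. The condition $[\textnormal{CROSS}]_p=0$ forces $(R^A_{x,m}\otimes R^B_{y,m'})\ket{\psi}=0$ for $m\neq m'$, and together with completeness this yields the block-consistency identity $(R^A_{x,m}\otimes I)\ket{\psi}=(I\otimes R^B_{y,m})\ket{\psi}$ on the support of $\ket{\psi}$, with a parallel statement for the primed projectors. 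Restricted to block $m$, the residual correlation on outcomes $\{2m,2m+1\}$ attains the Tsirelson value of standard CHSH, so standard CHSH self-testing \cite{MYS12} supplies a local isometry identifying the block-$m$ component of $\ket{\psi}$ with a two-qubit maximally entangled state of weight $2/d$, and fixing Alice's and Bob's restricted observables to canonical CHSH form; the same holds on each primed block via $\textnormal{CHSH}'_m$.

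The gluing step exploits the fact that question $x=0$ is shared between the unprimed and primed systems, so the projector $\Pi^{2m+1}_{A_0}$ lies simultaneously inside the unprimed block $\{2m,2m+1\}$ and the primed block $\{2m+1,2m+2\}$. Per-block self-testing fixes the orientation of $\Pi^{2m}_{A_0}$ relative to $\Pi^{2m+1}_{A_0}$, and separately that of $\Pi^{2m+1}_{A_0}$ relative to $\Pi^{2m+2}_{A_0}$; chaining these identifications for $m=0,\ldots,\floor{d/2}-1$ propagates a single global computational basis in which the $\Pi^{a}_{A_0}$ appear as the canonical rank-one projectors. Stitching the per-block maximally entangled states together along this basis yields a local isometry $\Phi=\Phi_A\otimes\Phi_B$ satisfying $\Phi(\ket{\psi})=\ket{\Psi}\otimes\ket{\textnormal{aux}}$, and the measurement conditions~\eqref{eq: measurements} follow by applying $\Phi$ to the post-measurement vectors $\Pi^a_{A_x}\otimes\Pi^b_{B_y}\ket{\psi}$ and tracking them through the same identifications. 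For odd $d$, the outcomes $d-1$ and $0$ lie outside every unprimed, respectively primed, block; the dedicated corner terms of Definition~\ref{def: bell-operators}, weighted by $\frac{\sqrt{2}}{2}$, are exactly what is needed to fix their Schmidt weights to $1/d$ and align them with the adjacent blocks. The main obstacle is this gluing step: verifying compatibility of the per-block isometries at the shared eigenprojectors $\Pi^{2m+1}_{A_0}$, and separately handling the odd-$d$ corner outcome, which does not appear inside any CHSH sub-game and must be fit in using only the dedicated corner contribution. The chaining idea is directly inspired by the construction of \cite{CGS17}, whose correlation-based self-test for arbitrary pure bipartite entangled states the present Bell operator is designed to distill into a single inequality.
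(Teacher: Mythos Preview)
There is a genuine gap in your first step. You assert that saturation of $[\mathcal{B}]_p$ forces $[\textnormal{CHSH}_m]_p = 2\sqrt{2}$ for every $m$, but this is impossible by counting: for $d>2$ even there are $d/2$ such terms and $d/2$ primed terms, so if each equalled $2\sqrt{2}$ the total would be $d\cdot 2\sqrt{2}$, not the actual bound $4\sqrt{2}$. The correct picture is that $\sum_m [\textnormal{CHSH}_m]_p = 2\sqrt{2}$ at the optimum, with each block contributing $w_m\cdot 2\sqrt{2}$ for nonnegative weights $w_m$ summing to $1$ (and similarly weights $w'_m$ on the primed side). Your naive bound $[\textnormal{CHSH}_m]_p\le 2\sqrt{2}$ is far from tight and gives no saturation condition. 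The paper does not use a sum-of-squares argument at all: it first proves that any maximizer must have $[\textnormal{CROSS}]_p=[\textnormal{CROSS}']_p=0$ via a contradiction argument (Lemma~\ref{lem: key}, which builds a qubit CHSH strategy exceeding $2\sqrt{2}$ by a careful relabelling of outcomes), and only \emph{then} obtains the weighted block decomposition (Lemma~\ref{lem: weights}) from which $\sum_m[\textnormal{CHSH}_m]_p\le 2\sqrt{2}$ follows.

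The second gap is that you simply assert the blocks have weight $2/d$, but this is precisely the nontrivial content that distinguishes the maximally entangled state from any other block-diagonal state. Per-block CHSH self-testing only tells you that, conditioned on landing in block $m$, the state is an EPR pair; it says nothing about the relative weights $w_m$. The paper's argument here is the one you vaguely gesture at in your ``gluing step'', but used for a different purpose: because the projector $\Pi^{2m+1}_{A_0}$ sits in both the unprimed block $m$ and the primed block $m$, one gets $\|\Pi^{2m+1}_{A_0}\ket{\psi}\|^2 = w_m/2 = w'_m/2$, and similarly $\|\Pi^{2m}_{A_0}\ket{\psi}\|^2 = w_m/2 = w'_{m-1}/2$. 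Chaining these equalities forces $w_m=w'_m=2/d$ for all $m$. Only after the weights are pinned down does the correlation coincide with the ideal one of Lemma~\ref{lem: ideal corr}, at which point the paper invokes \cite{CGS17} directly for the isometry, rather than constructing it by hand from per-block CHSH isometries as you propose.
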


\vspace{3mm}
\paragraph{Proof overview} At a high level, the proof of Theorems \ref{thm: quantum-bound} and \ref{thm: exact self-testing} goes through the following steps:
\begin{itemize}
    \item [(i)] The correlation from \cite{CGS17} (in the maximally entangled case), achieves the RHS of \eqref{eq:7} (Lemma \ref{lem: ideal corr});
    \item[(ii)] Any correlation achieving the maximal quantum value of the Bell operator must have zero probability mass on the cross terms. This is proved by starting from a correlation which achieves the maximum but has non-zero cross terms, and modifying this into a strategy for qubit CHSH which achieves a value strictly higher than $2\sqrt{2}$, which is a contradiction. (This is the content of Lemma \ref{lem: key});
    \item[(iii)] Having zero cross-terms forces the correlations to have the block-diagonal form of \cite{CGS17}. The $2 \times 2$ blocks are across pairs of answers $\{2m, 2m+1\}$ for questions $x,y \in \{0,1\}$ and across pairs of answers $\{2m+1, 2m+2\}$ for questions $x \in \{0,2\}, y \in \{2,3\}$ (Lemma \ref{lem: weights});
    \item[(iv)] Finally, the freedom in the value of the weights of the blocks is fixed by the requirement that the block-diagonal structure is both over pairs of answers $\{2m, 2m+1\}$, for $x,y \in \{0,1\}$, and also over pairs of answers $\{2m+1, 2m+2\}$, for $x \in \{0,2\}, y \in \{2,3\}$, and these two subsets of questions have in common the question $x = 0$.
\end{itemize}

We will now describe ideal correlations achieving the quantum bound of \eqref{eq:7}. For a single-qubit observable $A$, we denote by $(A)_m$ the observable defined with respect to the basis $\left(\ket{2m},\ket{2m+1}\right)$. For example, $(\sigma_Z)_m = \ket{2m}\bra{2m} - \ket{2m+1}\bra{2m+1}$. Similarly, we denote by $(A)'_m$ the observable defined with respect to the basis $\left(\ket{2m+1}, \ket{2m+2}\right)$.

\begin{lem}
[Ideal correlation from \cite{CGS17} achieving the quantum bound]
\label{lem: ideal corr}
The correlation $p^* \in \mathcal{C}_q^{3,4,d,d}$ specified by the following quantum strategy $(\ket{\Psi}, \{\Pi_{A_x}^a\}_a, \{\Pi_{B_y}^b\}_b\})$ achieves the RHS of \eqref{eq:7}:
\begin{itemize}
    \item $\ket{\Psi} = \frac{1}{\sqrt{d}}\sum_{i=0}^{d-1} \ket{ii}$
    \item For $m = 0,.., \floor{\frac{d}{2}} -1$:
    \begin{itemize}
        \item $\Pi_{A_0}^{2m}$, $\Pi_{A_0}^{2m+1}$ are the projectors respectively onto the $+1,-1$ eigenspaces of $(\sigma_Z)_m$. (in other words, the measurement for $x=0$ is in the computational basis);
        \item $\Pi_{A_1}^{2m}$, $\Pi_{A_1}^{2m+1}$ onto the $+1,-1$ eigenspaces of $(\sigma_X)_m$. If $d$ is odd, $\Pi_{A_1}^{d-1} = \ket{d-1}\bra{d-1}$
        \item $\Pi_{A_2}^{2m+1}$, $\Pi_{A_2}^{2m+2}$ onto the $+1,-1$ eigenspaces of $(\sigma_X)'_m$. If $d$ is odd, $\Pi_{A_2}^{0} = \ket{0}\bra{0}$.

    \end{itemize} 
    \item For $m = 0,.., \floor{\frac{d}{2}} -1$:
    \begin{itemize}
        \item For $y \in \{0,1\}$, $\Pi_{B_y}^{2m}$, $\Pi_{B_y}^{2m+1}$ are the projectors respectively onto the $+1,-1$ eigenspaces of $(\frac{\sigma_Z + (-1)^{y} \sigma_X}{\sqrt{2}})_m$. If $d$ is odd, $\Pi_{B_y}^{d-1} = \ket{d-1}\bra{d-1}$; 
        \item For $y \in \{2,3\}$, $\Pi_{B_y}^{2m+1}$, $\Pi_{B_y}^{2m+2}$ onto the $+1,-1$ eigenspaces of $(\frac{\sigma_Z + (-1)^{y} \sigma_X}{\sqrt{2}})'_m$. If $d$ is odd, $\Pi_{B_y}^{0} = \ket{0}\bra{0}$. 
    \end{itemize}
\end{itemize}
\end{lem}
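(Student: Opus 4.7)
The lemma is a direct computation which becomes transparent once the block-diagonal structure of the ideal strategy is made explicit. My plan is as follows. First I would set up the blocks $V_m := \operatorname{span}(\ket{2m}, \ket{2m+1})$ for $m = 0, \ldots, \lfloor d/2 \rfloor - 1$, together with the shifted blocks $V'_m := \operatorname{span}(\ket{2m+1}, \ket{(2m+2) \bmod d})$ (so $V'_{d/2-1}$ wraps around to $\operatorname{span}(\ket{d-1}, \ket{0})$ when $d$ is even, and for $d$ odd there is a rank-$1$ leftover spanned by $\ket{d-1}$ on the unprimed side and by $\ket{0}$ on the primed side). The crucial observation is that every projector $\Pi_{A_x}^a$, $\Pi_{B_y}^b$ for $(x,y) \in \{0,1\} \times \{0,1\}$ is supported in a single $V_m$ or in the leftover, and analogously for $(x,y) \in \{0,2\} \times \{2,3\}$ with $V'_m$. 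Restricted to $V_m \otimes V_m$, the state $\ket{\Psi}$ equals $\sqrt{2/d} \cdot \ket{\phi_m}$, where $\ket{\phi_m} = \tfrac{1}{\sqrt{2}}(\ket{2m,2m} + \ket{2m+1,2m+1})$ is a maximally entangled qubit state on that block (and analogously on $V'_m \otimes V'_m$).

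The block-diagonality immediately gives $[\textnormal{CROSS}]_{p^*} = [\textnormal{CROSS}']_{p^*} = 0$: any outcome $(a,b)$ straddling two distinct blocks has zero probability. Within each block, the restricted measurements are exactly the canonical qubit CHSH measurements (Pauli $Z$ and $X$ for Alice, rotated bases for Bob), up to a uniform sign flip on all four observables in the primed case, a flip which does not affect the CHSH value. Hence every block contributes $\tfrac{2}{d} \cdot 2\sqrt{2} = \tfrac{4\sqrt{2}}{d}$ to its $\textnormal{CHSH}_m$ (or $\textnormal{CHSH}'_m$) functional. Summing across cases: for $d = 2$ the single CHSH block yields $2\sqrt{2}$; for $d > 2$ even, the CHSH and CHSH$'$ sums each total $2\sqrt{2}$, giving $4\sqrt{2}$; for $d$ odd each sum only reaches $2\sqrt{2}(d-1)/d$, but a direct evaluation shows each of the eight probabilities $p^*(d-1,d-1|x,y)$ and $p^*(0,0|x,y)$ appearing in the indicator term equals $1/d$, so that term contributes exactly $\tfrac{\sqrt{2}}{2} \cdot \tfrac{8}{d} = \tfrac{4\sqrt{2}}{d}$ and closes the gap. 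In every case the total equals $2\sqrt{2}(1 + \mathds{1}_{\{d>2\}})$, as required.

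The main place where a bit of care is needed is the wrap-around primed block at $m = d/2 - 1$ for $d$ even: one must check that $\tfrac{1}{\sqrt{2}}(\ket{d-1,d-1} + \ket{0,0})$ is still a maximally entangled qubit state on $V'_{d/2-1} \otimes V'_{d/2-1}$, and that the Pauli operators $(\sigma_Z)'_{d/2-1}$ and $(\sigma_X)'_{d/2-1}$ defined with respect to the ordered basis $(\ket{d-1}, \ket{0})$ realize the canonical CHSH protocol on this ``qubit''. For $d$ odd one also verifies that the rank-$1$ leftover projectors are orthogonal to every $V_m$ (resp.\ every $V'_m$), so they do not spuriously contribute to any CHSH block and instead only surface through the indicator term. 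With these checks in place, everything else is bookkeeping.
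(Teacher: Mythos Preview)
Your proposal is correct and is precisely the kind of block-by-block verification the paper has in mind; the paper's own proof of this lemma is the single sentence ``This is a straightforward check,'' and your write-up is a faithful (and careful) unpacking of that check, including the sign-flip observation for the primed blocks and the wrap-around at $m=d/2-1$.
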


\begin{proof}
This is a straightforward check.
\end{proof}

\begin{lem}[Zero mass on the cross terms]\label{lem: key}
Let $p \in \mathcal{C}_q^{3,4,d,d}$ be a quantum correlation achieving maximal quantum value of $\B$. Then, $p(a,b|x,y) = 0$ $\forall (a,b,x,y) \in \mathcal{C} \cup \mathcal{C}'$, where $\mathcal{C}$ and $\mathcal{C}'$ are as in equations \eqref{eq: C} and \eqref{eq: C prime}.
\end{lem}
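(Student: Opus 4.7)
The plan is to derive a Tsirelson-type upper bound $\sum_m [\textnormal{CHSH}_m]_p \leq 2\sqrt{2}$ (and the analogous bound for the primed sum) that holds for every quantum correlation $p$. Combined with the $-\epsilon([\textnormal{CROSS}]_p + [\textnormal{CROSS}']_p)$ penalty in $\B$, this forces the cross terms to vanish for any $p$ attaining the maximum: by Lemma \ref{lem: ideal corr}, the maximum $2\sqrt{2}(1 + \mathds{1}_{\{d>2\}})$ is attained with zero cross contributions (at least for even $d$), so any other maximizer must also make the penalty zero, and since $\epsilon > 0$ this is precisely the lemma.

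The heart of the proof is a random-sign averaging reduction to qubit CHSH. For each sign vector $\sigma \in \{\pm 1\}^{\floor{d/2}}$, I would form, on Alice's space,
\[
\hat{A}_x^\sigma \;:=\; \sum_{m=0}^{\floor{d/2} - 1} \sigma_m \bigl(\Pi^{2m}_{A_x} - \Pi^{2m+1}_{A_x}\bigr), \qquad x \in \{0,1\},
\]
and define $\hat{B}_y^\sigma$ analogously for $y \in \{0,1\}$. Because the block projectors have pairwise orthogonal ranges, each $\hat{A}_x^\sigma$ and $\hat{B}_y^\sigma$ is a Hermitian contraction, so Tsirelson's inequality, which extends to contractions via the standard self-adjoint $\pm 1$-dilation, yields
\[
\bigl\langle \hat{A}_0^\sigma \hat{B}_0^\sigma + \hat{A}_0^\sigma \hat{B}_1^\sigma + \hat{A}_1^\sigma \hat{B}_0^\sigma - \hat{A}_1^\sigma \hat{B}_1^\sigma\bigr\rangle_\psi \;\leq\; 2\sqrt{2}
\]
for every $\sigma$. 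A direct expansion shows the left-hand side equals $\sum_m [\textnormal{CHSH}_m]_p + \sum_{m \neq n} \sigma_m\sigma_n T_{m,n}$, where $T_{m,n}$ is the CHSH-signed sum of probabilities across the cross block $a \in \{2m, 2m+1\}$, $b \in \{2n, 2n+1\}$. Averaging the inequality over independent uniform $\sigma$ kills the off-diagonal contributions because $\mathbb{E}[\sigma_m\sigma_n] = \delta_{m,n}$, leaving $\sum_m [\textnormal{CHSH}_m]_p \leq 2\sqrt{2}$. The identical construction for the primed observables and questions $x \in \{0,2\}, y \in \{2,3\}$ gives $\sum_m [\textnormal{CHSH}'_m]_p \leq 2\sqrt{2}$, from which the lemma follows for even $d$ by the comparison outlined above.

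The main obstacle is the odd-$d$ case, where the bonus $\frac{\sqrt{2}}{2}\sum_{x,y \in \{0,1\}} p(d-1,d-1|x,y)$ (and its primed analogue) has to be absorbed into the same Tsirelson-type bound. The natural extension is to enlarge $\sigma$ to $\{\pm 1\}^{\floor{d/2} + 1}$ and append $\sigma_{\floor{d/2}} \Pi^{d-1}_{A_x}$ to $\hat{A}_x^\sigma$ (and similarly for $\hat{B}_y^\sigma$), treating the singleton $\{d-1\}$ as an honorary block. The catch is that this sign averaging produces a contribution of the form $\sum_{x,y}(-1)^{xy} p(d-1,d-1|x,y)$, which carries the CHSH sign pattern rather than the all-positive pattern of the bonus; handling this requires combining several Tsirelson instances (for example, one per rotation of the minus-sign position in the CHSH pattern) to synthesize the correct coefficient on the $(d-1,d-1)$ terms, and then checking that the combined bound still tightly matches the ideal correlation so that the $\epsilon$-penalty argument goes through. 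This balancing act between the bonus coefficient $\frac{\sqrt{2}}{2}$ and the effective coefficient produced by the sign-averaging is where I expect the main technical difficulty to lie.
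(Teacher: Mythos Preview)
Your even-$d$ argument is correct and is essentially the same construction as the paper's: both form the signed observables $\sum_m \sigma_m(\Pi^{2m}-\Pi^{2m+1})$ and reduce to qubit Tsirelson. The only difference is that you establish $\sum_m[\textnormal{CHSH}_m]_p\le 2\sqrt{2}$ by averaging over all $\sigma$ (using $\mathbb{E}[\sigma_m\sigma_n]=\delta_{mn}$ to kill the off-diagonal blocks), whereas the paper argues by contradiction, assuming $\sum_m[\textnormal{CHSH}_m]_p>2\sqrt{2}$ and then exhibiting a single good $\sigma$ via a greedy choice that makes the cross contribution $C\ge 0$. Your averaging is slightly cleaner and also delivers the quantum bound (Theorem~\ref{thm: quantum-bound}) immediately, without going through Lemma~\ref{lem: weights}.

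For odd $d$ you correctly identify the obstacle but do not resolve it, and your proposed fix has a real problem. Averaging over rotations of the minus-sign position in the CHSH pattern would indeed turn the $(d-1,d-1)$ diagonal contribution into an all-plus sum, but it does so with coefficient $\tfrac12$, not $\tfrac{\sqrt2}{2}$, and at the same time it destroys the main $[\textnormal{CHSH}_m]_p$ terms (they too get rotated and averaged), so the resulting inequality no longer dominates the Bell operator tightly. The paper uses a different device: it tensors the state with an auxiliary EPR pair and, whenever a party sees outcome $d-1$, has them measure the EPR pair with the ideal qubit CHSH observable. Because the ideal CHSH correlators satisfy $\sum_{a,b}(-1)^{a\oplus b}q(a,b|x,y)=(-1)^{xy}\tfrac{1}{\sqrt2}$, the $(d-1,d-1)$ block contributes exactly $\tfrac{\sqrt2}{2}\sum_{x,y}p(d-1,d-1|x,y)$ to the resulting CHSH value, matching the bonus term in $\B$; the remaining sign choices are then handled as in the even case. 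This EPR-tensoring trick is the missing ingredient in your odd-$d$ sketch.
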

This establishes that any correlation maximally violating the Bell inequality must have the same block-diagonal form of the self-testing correlation from Lemma \ref{lem: ideal corr}.
\begin{proof}
We argue first for the case of $d$ even. 
We will show that any correlation achieving maximal value of $\B$ must have $p(a,b|x,y) = 0 \,\,\, \forall (a,b,x,y) \in \mathcal{C} \cup \mathcal{C}'$.
Suppose for a contradiction that a correlation $p \in \mathcal{C}_q^{3,4,d,d}$ achieves the maximal value of $\B$ and $p(a,b|x,y) = \gamma > 0$ for some $(a,b,x,y) \in \mathcal{C} \cup \mathcal{C}'$. 
In order to compensate for the negative contribution due to the presence of the cross terms in \eqref{eq: max bell operator} (which are multiplied by an arbitrary small but positive constant $\delta$), it must be the case that either $\sum_{m=0}^{\frac{d}{2}-1} [\textnormal{CHSH}_m]_p > 2\sqrt{2}$ or $\sum_{m=0}^{\frac{d}{2}-1} [\textnormal{CHSH}'_m]_p > 2\sqrt{2}$ (since we know from Lemma \ref{lem: ideal corr} that the maximal value of $\mathcal{B}$ is at least $2 \cdot 2\sqrt{2}$.). Assume the former (the other case being similar).

Let $S = (\ket{\psi}, \Pi_{A_x}^a, \Pi_{B_y}^b)$ be a quantum strategy producing correlation $p$. We will use this to construct a correlation $\tilde{p} \in \mathcal{C}_q^{2,2,2,2}$ that achieves a value of CHSH greater than $2\sqrt{2}$, which would be a contradiction. This is achieved by starting from strategy $S$ and mapping each pair of answers $(2k,2k+1)$ in $\{2,..,d-1\}$ to either their parity or the opposite of their parity, i.e either $(2k,2k+1) \mapsto (0,1)$ or $(2k,2k+1) \mapsto (1,0)$. More precisely, for $\vec{o} \in \{0,1\}^{\frac{d}{2}-1}$ let $\vec{o}[m]$ denote the $m$th bit of $\vec{o}$, and define a new quantum strategy for standard CHSH $S^{(\vec{o})} = (\ket{\psi}, \{\tilde{\Pi}_{A_x}^{a}\}_{a,x \in \{0,1\}}, \{\tilde{\Pi}_{B_y}^{b}\}_{b,y \in \{0,1\}})$ on the same state $\ket{\psi}$, with projectors, for $x,y \in \{0,1\}$,
\begin{align}
    \tilde{\Pi}_{A_x}^{0} &= \Pi_{A_x}^{0} +  \sum_{m=1}^{\frac{d}{2}-1}\Pi_{A_x}^{2m + \vec{o}[m]} \,\,\,\,\, &\tilde{\Pi}_{A_x}^{1} = \Pi_{A_x}^{1} +  \sum_{m=1}^{\frac{d}{2}-1}\Pi_{A_x}^{2m + 1 -\vec{o}[m]} \\
    \tilde{\Pi}_{B_y}^{0} &= \Pi_{B_y}^{0} +  \sum_{m=1}^{\frac{d}{2}-1}\Pi_{B_y}^{2m + \vec{o}[m]} \,\,\,\,\, &\tilde{\Pi}_{B_y}^{1} = \Pi_{B_y}^{1} +  \sum_{m=1}^{\frac{d}{2}-1}\Pi_{B_y}^{2m + 1 -\vec{o}[m]}
\end{align}
Let $\tilde{p}^{(\vec{o})}$ be the resulting correlation. Now, let $[\textnormal{CHSH}]_{\tilde{p}^{(\vec{o})}}$ be the CHSH value of correlation $\tilde{p}^{(\vec{o})}$. Since CHSH is an XOR game (i.e. only the xor of the answers matters), it's easy to see that for any $\vec{o} \in\{0,1\}^{\frac{d}{2}-1}$ 

\begin{equation}
[\textnormal{CHSH}]_{\tilde{p}^{(\vec{o})}} = \sum_{m=0}^{\frac{d}{2}-1} [\textnormal{CHSH}_m]_p + C
\end{equation}
where $C$ is a (possibly negative) contribution which comes from the cross terms of the form $\bra{\psi} \Pi_{A_x}^{a} \otimes \Pi_{B_y}^{b} \ket{\psi}$ for $(a,b,x,y) \in \mathcal{C}$. However, there exists a choice of $\vec{o} \in \{0,1\}^{\frac{d}{2}-1}$ such that $C \geq 0$. In fact, notice that the contributions to $C$ coming from cross terms involving $(2m,2m+1)$ when one chooses $\vec{o}[m] = 0$ or $\vec{o}[m] = 1$ (and keeps the other choices fixed) are the negative of each other. Hence at least one of the two choices gives a non-negative contribution. Then, pick $\vec{o} \in \{0,1\}^{\frac{d}{2}-1}$ as follows: for $m = 1,.., \frac{d}{2}-1$, in this order, choose a value of $\vec{o}[m]$ for which the contribution from cross terms involving pairs $(2m,2m+1)$ and $(2m',2m'+1)$ for $m'<m$ is non-negative.  
This gives $C \geq 0$.

So, for this choice of $\vec{o}$, one gets $[\textnormal{CHSH}]_{\tilde{p}^{(\vec{o})}} > 2\sqrt{2}$, which is the desired contradiction.

The case of $d$ odd is similar but requires slightly more effort. Suppose $p \in \mathcal{C}_q^{3,4,d,d}$ achieves the maximal value of $\B$ and $p(a,b|x,y) = \gamma > 0$ for some $(a,b,x,y) \in \mathcal{C} \cup \mathcal{C}'$. Then it must be the case that either $\sum_{m=0}^{\frac{d}{2}-1} [\textnormal{CHSH}_m]_p + \frac{\sqrt{2}}{2}\cdot\sum_{x,y \in \{0,1\}} p(d-1, d-1 | x,y) > 2\sqrt{2}$ or $\sum_{m=0}^{\frac{d}{2}-1} [\textnormal{CHSH}'_m]_p+ \frac{\sqrt{2}}{2}\cdot\sum_{x \in \{0,2\}, y \in \{2,3\}} p(0, 0 | x,y) > 2\sqrt{2}$. Suppose the former (the latter case being similar). Let $S = (\ket{\psi}, \Pi_{A_x}^a, \Pi_{B_y}^b)$ be a quantum strategy producing correlation $p$. For a string $\vec{o} \in \{0,1\}^{\frac{d}{2}-1}$, we construct the following strategy for CHSH $S^{(\vec{o})} = (\ket{\tilde{\psi}}, \{\tilde{\Pi}_{A_x}^{a}\}_{a,x \in \{0,1\}}, \{\tilde{\Pi}_{B_y}^{b}\}_{b,y \in \{0,1\}})$: intuitively, the two parties share the original state tensored with an EPR pair. They map outcomes $\{0,..,d-2\}$ to outcomes in $\{0,1\}$ (similarly as before). If one sees outcome $d-1$, they measure the shared EPR pair with an appropriate ideal CHSH measurement. More precisely, let $\{P_{A_x}^{a}\}_{a,x \in \{0,1\}}, \{P_{B_y}^{b}\}_{b,y \in \{0,1\}}$ be the ideal CHSH qubit measurements. Then, $\ket{\tilde{\psi}} = \ket{\psi} \otimes \ket{\textnormal{EPR}}$, and
\begin{align*}
        \tilde{\Pi}_{A_x}^{0} &= [\Pi_{A_x}^{0} +  \sum_{m=1}^{\floor{\frac{d}{2}}-1}\Pi_{A_x}^{2m + \vec{o}[m]}] \otimes I + \Pi_{A_x}^{d-1} \otimes P_{A_x}^{0}, \,\,\,
        &\tilde{\Pi}_{A_x}^{1} &= [\Pi_{A_x}^{1} +  \sum_{m=1}^{\floor{\frac{d}{2}}-1}\Pi_{A_x}^{2m +1 - \vec{o}[m]}] \otimes I + \Pi_{A_x}^{d-1} \otimes P_{A_x}^{1} \\
        \tilde{\Pi}_{B_y}^{0} &= [\Pi_{B_y}^{0} +  \sum_{m=1}^{\floor{\frac{d}{2}}-1}\Pi_{B_y}^{2m + \vec{o}[m]}] \otimes I + \Pi_{B_y}^{d-1} \otimes P_{B_y}^{0}, \,\,\, 
        &\tilde{\Pi}_{B_y}^{1} &= [\Pi_{B_y}^{1} +  \sum_{m=1}^{\floor{\frac{d}{2}}-1}\Pi_{B_y}^{2m + 1 - \vec{o}[m]}] \otimes I + \Pi_{B_y}^{d-1} \otimes P_{B_y}^{1}
    \end{align*}
One can check, then, that with the appropriate choice of $\vec{o}$ (chosen similarly to the $d$ even case), this gives a strategy for CHSH which achieves a value strictly greater than $2\sqrt{2}$.

\end{proof}

The following lemma establishes that if a correlation $p$ has zero cross-terms, then this implies that the restriction of $p$ to the subset of questions $(x,y) \in \{0,1\}^2$ and to answers $a,b \in \{2m, 2m+1\}$ is still a correlation (multiplied by some weight). Likewise for the restriction to the subset of questions $(x,y) \in \{0,2\} \times \{2,3\}$ and to answers $a,b \in \{2m+1, 2m+2\}$. 
\begin{lem}\label{lem: weights}
Any correlation $p \in \mathcal{C}_q^{3,4,d,d}$ with zero cross-terms (i.e of the form of Lemma \ref{lem: key}), induced by some strategy $\left(\ket{\psi}, \{\Pi_{A_x}^a\}_a, \{\Pi_{B_y}^b\}_b \right)$, satisfies the following:

\begin{itemize}
    \item If $d$ is even, for each $m=0,..,\frac{d}{2}-1$, there exist weights $w_m, w_m' \geq 0$ with $\sum_m w_m = 1$, $\sum_m w'_m = 1$ and correlations $p_m, p_m' \in \mathcal{C}_q^{2,2,2,2}$ (with questions in $\{0,1\}^2$ and $\{0,2\}\times \{2,3\}$ respectively, and answers in $\{0,1\}$) such that $\forall m$, $\forall a,b \in \{2m,2m+1\}, x,y \in \{0,1\}$: 
    \begin{equation}
    \label{eq: 10}
        p(a,b|x,y) = w_m \cdot p_m(a\mod 2,b\mod 2|x,y) = \bra{\psi}\Pi_{A_x}^a\otimes \Pi_{B_y}^b \ket{\psi}
    \end{equation}

and $\forall m$, $\forall a,b \in \{2m+1,2m+2\}, x \in \{0,2\}, y \in \{2,3\}$: \begin{equation}
\label{eq: 11}
    p(a,b|x,y) = w_m' \cdot p_m'(a \mod 2,b \mod 2|x,y) = \bra{\psi}\Pi_{A_x}^a\otimes \Pi_{B_y}^b \ket{\psi}
\end{equation}

\item If $d$ is odd, the analogous statement holds, except that the weights $w_m, w_m'$ are such that $\sum_m w_m + p(d-1,d-1|0,0)  = \sum_m w'_m + p(0,0|2,2) = 1$, AND 
\begin{itemize}
    \item $p(d-1,d-1|x,y) = p(d-1,d-1|x',y') \,\, \forall x,y,x'y' \in \{0,1\}$
    \item $p(0,0|x,y) = p(0,0|x',y') \,\, \forall x,x' \in \{0,2\}, y,y' \in \{2,3\}$
\end{itemize}

\end{itemize}

\end{lem}

\begin{proof}
Let $p \in \mathcal{C}_q^{3,4,d,d}$ be of the form of Lemma \ref{lem: key}, and let $(\ket{\psi}$, $\{\Pi_{A_x}^{a}\}$, $\{\Pi_{B_y}^b\})$, be a strategy reproducing $p$. Then, for $m=0,..,\frac{d}{2}-1$ define:
\begin{itemize}
    \item[(i)] for $x,y \in \{0,1\}$, $A_x^{(m)} = \Pi_{{A}_x}^{2m} - \Pi_{{A}_x}^{2m+1}$ and $B_y^{(m)} = \Pi_{B_y}^{2m} - \Pi_{{B}_y}^{2m+1}$ 
    \item[(ii)] for $x \in \{0,2\}, y \in \{2,3\}$, $A_x^{\prime (m)} = \Pi_{{A}_x}^{2m+1} - \Pi_{{A}_x}^{2m+2}$ and $B_y^{\prime (m)}= \Pi_{{B}_y}^{2m+1} - \Pi_{{B}_y}^{2m+2}$
\end{itemize}

Define the subspaces $\mathcal{U}_m = \text{Range}(A_0^{(m)})+ \text{Range}(A_1^{(m)})$ and $\mathcal{V}_m = \text{Range}(B_0^{(m)})+ \text{Range}(B_1^{(m)})$, and let $\mathds{1}_{{\mathcal{U}}_m}$ and $\mathds{1}_{{\mathcal{V}}_m}$ be projections onto these subspaces. Let $\ket{\psi_m} := \mathds{1}_{{\mathcal{U}}_m}\mathds{1}_{{\mathcal{V}}_m} \ket{\psi}$

We will check that $\mathds{1}_{{\mathcal{U}}_m} \ket{\psi} = \mathds{1}_{\text{Range}(A_0^{(m)})}\ket{\psi} =  \mathds{1}_{\text{Range}(B_0^{(m)})} \ket{\psi} = \mathds{1}_{{\mathcal{V}}_m} \ket{\psi} = \ket{\psi_m}$. We compute 
\begin{align}
\mathds{1}_{\text{Range}(A_0^{(m)})}\ket{\psi} &= \left(\Pi_{A_0}^{2m}+ \Pi_{A_0}^{2m+1}\right) \ket{\psi} \\
&= \left(\Pi_{A_0}^{2m}+ \Pi_{A_0}^{2m+1}\right) \sum_{l=0}^{d-1} \Pi_{B_0}^{l}\ket{\psi} \\
&= \left(\Pi_{A_0}^{2m}+ \Pi_{A_0}^{2m+1}\right) \left(\Pi_{B_0}^{2m}+ \Pi_{B_0}^{2m+1}\right) \ket{\psi} \label{eq: 12} \\
&= \mathds{1}_{\text{Range}(A_0^{(m)})} \mathds{1}_{\text{Range}(B_0^{(m)})} \ket{\psi}, \label{eq: 14}
\end{align}
where the third line follows from the hypothesis that the correlation has the form of Lemma \ref{lem: key}. The same calculation starting from $\mathds{1}_{\text{Range}(B_0^{(m)})} \ket{\psi}$ gives $\mathds{1}_{\text{Range}(B_0^{(m)})} \ket{\psi}= \mathds{1}_{\text{Range}(A_0^{(m)})} \mathds{1}_{\text{Range}(B_0^{(m)})} \ket{\psi}$, which, together with \eqref{eq: 14}, implies $\mathds{1}_{\text{Range}(A_0^{(m)})} \ket{\psi} = \mathds{1}_{\text{Range}(B_0^{(m)})} \ket{\psi}$. With similar calculations, we also deduce $\mathds{1}_{\text{Range}(A_1^{(m)})} \ket{\psi} = \mathds{1}_{\text{Range}(B_0^{(m)})} \ket{\psi}$, which implies $\mathds{1}_{\text{Range}(A_0^{(m)})}  \ket{\psi} = \mathds{1}_{\text{Range}(A_1^{(m)})} \ket{\psi}$, and hence $\mathds{1}_{{\mathcal{U}}_m} \ket{\psi} = \mathds{1}_{\text{Range}(A_0^{(m)})}\ket{\psi}$. Similarly $\mathds{1}_{\text{Range}(B_0^{(m)})}  \ket{\psi} = \mathds{1}_{\text{Range}(B_1^{(m)})} \ket{\psi}$, and hence $\mathds{1}_{{\mathcal{V}}_m} \ket{\psi} = \mathds{1}_{\text{Range}(B_0^{(m)})}\ket{\psi}$. 
Altogether, we have deduced that \begin{equation}
\label{eq: 14}
    \mathds{1}_{{\mathcal{U}}_m} \ket{\psi} = \mathds{1}_{\text{Range}(A_0^{(m)})}\ket{\psi} =  \mathds{1}_{\text{Range}(B_0^{(m)})} \ket{\psi} = \mathds{1}_{{\mathcal{V}}_m} \ket{\psi} = \ket{\psi_m}
\end{equation}.

Finally, set $w_m = \|\ket{\psi_m}\|^2$ to get the desired weights, and take the correlations $p_m$ as in \eqref{eq: 10}. We argue similarly for the weights $w'_m$ and the correlations $p_m'$. A very similar argument yields the conclusion for the case of odd $d$.

\end{proof}

\begin{cor}
\label{cor: 1}
Any correlation $p \in \mathcal{C}_q^{3,4,d,d}$ with zero cross-terms  (i.e. of the form of Lemma \ref{lem: key}) satisfies the following:

\begin{itemize}
\item If $d$ is even, there exist weights $w_m, w'_m \geq 0$, $m=0,..,\frac{d}{2}-1$, with $\sum_m w_m = 1$, $\sum_m w'_m = 1$, such that, for all $m$,
$$[\textnormal{CHSH}_m]_p \leq w_m \cdot 2\sqrt{2}$$ and 
$$[\textnormal{CHSH}'_m]_p \leq w'_m \cdot 2\sqrt{2}$$

\item If $d$ is odd, the analogous statement holds, except that the weights $w_m, w_m'$ are such that $\sum_m w_m + p(d,d|0,0) = 1$, $\sum_m w'_m + p(0,0|2,2) = 1$.
\end{itemize}
\end{cor}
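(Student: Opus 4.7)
The plan is to combine Lemma \ref{lem: weights} with Tsirelson's bound on standard CHSH. By Lemma \ref{lem: weights}, a zero-cross-term correlation $p$ decomposes, for $x,y \in \{0,1\}$ and $a,b \in \{2m,2m+1\}$, as $p(a,b|x,y) = w_m \cdot p_m(a\!\!\mod 2, b \!\!\mod 2 | x,y)$ with $p_m \in \mathcal{C}_q^{2,2,2,2}$ and analogously for the primed block. So I would first observe that in the expression \eqref{eq: chsh functional} for $[\textnormal{CHSH}_m]_p$ the sign $(-1)^{a\oplus b - xy}$ depends only on the parities $a \!\!\mod 2$ and $b \!\!\mod 2$. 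Substituting the decomposition therefore gives
\begin{equation*}
[\textnormal{CHSH}_m]_p \;=\; w_m \cdot [\textnormal{CHSH}]_{p_m},
\end{equation*}
where $[\textnormal{CHSH}]_{p_m}$ denotes the standard qubit CHSH Bell functional evaluated on $p_m \in \mathcal{C}_q^{2,2,2,2}$.

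Next, I would invoke Tsirelson's bound: for any $p_m \in \mathcal{C}_q^{2,2,2,2}$ one has $[\textnormal{CHSH}]_{p_m} \leq 2\sqrt{2}$. This immediately yields $[\textnormal{CHSH}_m]_p \leq w_m \cdot 2\sqrt{2}$. The inequality for $[\textnormal{CHSH}'_m]_p$ follows by the identical argument, using the relabeling maps $f, g$ introduced before Definition \ref{def: bell-operators} so that the primed block $p'_m$ on questions $\{0,2\}\times\{2,3\}$ and answer pairs $\{2m+1,2m+2\}$ is cast as a standard CHSH correlation in $\mathcal{C}_q^{2,2,2,2}$, and then applying Tsirelson again to get $[\textnormal{CHSH}'_m]_p \leq w'_m \cdot 2\sqrt{2}$.

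For the odd-$d$ case, the CHSH$_m$ functionals only range over $m=0,\dots,\floor{d/2}-1$ and thus only involve answer pairs $\{2m,2m+1\} \subseteq \{0,\dots,d-2\}$; the extra singleton outcome $d-1$ (resp.\ $0$ for the primed version) does not appear in any $[\textnormal{CHSH}_m]_p$ or $[\textnormal{CHSH}'_m]_p$. Hence the same reduction to $p_m, p'_m \in \mathcal{C}_q^{2,2,2,2}$ from Lemma \ref{lem: weights} goes through unchanged, and Tsirelson's bound again gives the two stated inequalities. The only difference is that now $\sum_m w_m = 1 - p(d-1,d-1|0,0)$ and $\sum_m w'_m = 1 - p(0,0|2,2)$, consistent with the normalization in the corollary's statement.

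There is no real obstacle here: the corollary is essentially a restatement of Lemma \ref{lem: weights} in the language of Bell functionals, with Tsirelson's $2\sqrt{2}$ bound applied block-by-block. The only point requiring a moment of care is checking that the relabeling $a \mapsto a \!\!\mod 2$ in the primed block (which effectively swaps the role of $\{2m+1,2m+2\}$ with $\{0,1\}$) preserves the CHSH sign structure, but since CHSH is an XOR game this is automatic.
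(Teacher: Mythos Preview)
Your proposal is correct and is precisely the approach the paper takes: the paper's proof reads in its entirety ``This follows immediately from Lemma \ref{lem: weights},'' and you have simply spelled out that immediacy by noting $[\textnormal{CHSH}_m]_p = w_m \cdot [\textnormal{CHSH}]_{p_m}$ and applying Tsirelson's bound blockwise. Your care about the parity relabeling in the primed block is well placed but, as you observe, automatic since the sign depends only on $a\oplus b$.
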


\begin{proof}
This follows immediately from Lemma \ref{lem: weights}.
\end{proof}

\begin{proof}[Proof of Theorems \ref{thm: quantum-bound} and \ref{thm: exact self-testing}]
Assume $d>2$, as the $d=2$ case corresponds to standard CHSH. We start with $d$ even (the odd case being similar). Let $p \in \mathcal{C}_q^{3,4,d,d}$ be a correlation that achieves the maximal quantum value of $\B$. By Lemma \ref{lem: key}, $p$ must have zero cross-terms. Then, from Lemma \ref{lem: weights}, we deduce, for $m = 0,..,\frac{d}{2}-1$, the existence of weights $w_m, w_m'$ and correlations $p_m, p_m'$ satisfying the statement of the Lemma. This implies
\begin{equation}
\sum_{m=0}^{\frac{d}{2}-1} [\textnormal{CHSH}_m]_p = \sum_{m=0}^{\frac{d}{2}-1} w_m \cdot [\textnormal{CHSH}]_{p_m} \leq 2\sqrt{2} \label{eq:9}
\end{equation}
where we have bounded each term with the standard CHSH bound. 
Similarly, we also get $\sum_{m=0}^{\frac{d}{2}-1} [\textnormal{CHSH}^{\prime}_m]_p\leq 2\sqrt{2}$, which implies the desired upper bound of Theorem \ref{thm: quantum-bound}.

Such upper bound is achieved if and only if $[\textnormal{CHSH}]_{p_m}
= w_m \cdot 2\sqrt{2}$ for all $m$, and $[\textnormal{CHSH}^{\prime}_m]_p
= w'_m \cdot 2\sqrt{2}$ for all $m$. This is if and only if:
\begin{itemize}
    \item for all $m$,
    $w_m = 0$ OR $p_m$ is the ideal qubit CHSH correlation, AND
    \item for all $m$, $w'_m = 0$ OR $p_m'$ is the ideal qubit CHSH correlation
\end{itemize}

We want to argue that the only way that this can happen is if the weights are all equal (and non-zero). Once we have shown this, we notice that we have specified the correlation $p$ completely for the two subsets of questions $x,y \in \{0,1\}$ and $x\in \{0,2\}, y \in \{2,3\}$. From \cite{CGS17}, we know this is enough to uniquely determine the self-testing correlation for the maximally entangled state of local dimension $d$ presented in \cite{CGS17} (and in Lemma \ref{lem: ideal corr}), and we thus deduce that maximal violation of the Bell inequality self-tests $\ket{\Psi}$.

Let ($\ket{\psi}$, $\{\Pi_{A_x}^a\}_a$, $\{\Pi_{B_y}^b\}_b$ be a quantum strategy for $p$ (which achieves the upper bound). Then, by what we have argued above, for all $m$ we have $\|\Pi_{A_0}^{2m+1} \ket{\psi}\|^2 = w_{m} \cdot \frac12$, and this holds both when $w_m  \neq 0$ (and $p_m$ is the ideal qubit CHSH correlation) and when $w_m = 0$.
Likewise, we have that $\|\Pi_{A_0}^{2m+1} \ket{\psi}\|^2 = w'_m \cdot \frac12$. And similarly $\|\Pi_{A_0}^{2m} \ket{\psi}\|^2 = w_{m} \cdot \frac12$ and $\|\Pi_{A_0}^{2m} \ket{\psi}\|^2 = w'_{m-1} \cdot \frac12$. Clearly this, together with the constraint $\sum_{m}w_m = \sum_{m} w_m' = 1$, implies $w_m = w_m' = \frac{2}{d} \,\, \,\forall m$.

The proof is similar for the case of $d$ odd, where we instead deduce $w_m = w_m' = \frac{2}{d} \,\,\forall m$ (there are $\frac{d-1}{2}$ values of $m$) and $ p(d-1,d-1|x,y)= p(0,0|x',y') =  \frac{1}{d} \,\, \forall x,y \in \{0,1\}, x'\in \{0,2\}, y' \in \{2,3\}$.
\end{proof}

A robust version of the self-testing result \textit{via the correlations} of \cite{CGS17} was shown in \cite{CS17}, where, informally, the authors prove that a strategy producing a correlation that is $\epsilon$-close to the ideal one, must be $O(d^3\epsilon^{\frac14})$-close (according to some measures of distance) to the ideal strategy from Lemma \ref{lem: ideal corr}. However, this does not trivially translate to a robust self-test \textit{via our Bell inequality}, for which we require that a \textit{close-to-maximal violation} certifies a close-to-ideal strategy. Since translating the exact analysis to a robust analysis is not particularly illuminating, we leave the details to the appendix. For the robust self-testing theorem via our Bell inequality, refer to Theorem \ref{thm: robust self test} in the Appendix. 

\section{Generalizing the tilted CHSH inequalities (a conjecture)}
\label{sec: conj}
Let $I_{\alpha} = \sqrt{8+2\alpha^2}$ be the maximal quantum violation of the tilted CHSH inequality, for coefficient $\alpha$. The family of candidate Bell inequalities which we will describe is a very natural generalization of the Bell inequality from the previous section to the tilted case. We introduce some notation. For a correlation $p \in \mathcal{C}_q^{3,4,d,d}$, define
\begin{equation}
[\mbox{tCHSH}_m(\alpha)]_p := \,\, \alpha [p(a=2m|x=0) - p(a=2m+1|x=0)] + [\mbox{CHSH}_m]_p
\end{equation}
where $[\mbox{CHSH}_m]_p$ was defined earlier.
This can be thought of as a tilted CHSH Bell operator restricted to answers in $\{2m,2m+1\}$. Note that the above involves only questions $x,y \in \{0,1\}$. We can define a similar term for questions in $x \in \{0,2\}$ and $y \in \{2,3\}$ and answers in $\{2m+1, 2m+2\}$. Let
\begin{equation}
[\mbox{tCHSH}'_m(\alpha)]_p := \,\, \alpha [p(a=2m+1|x=0) - p(a=2m+2|x=0)] + [\mbox{CHSH}'_m]_p
\end{equation}

The sets $\mathcal{C}$ and $\mathcal{C}'$ of questions and answers corresponding to cross terms are defined as in the previous section. Then our candidate family of Bell operators generalizing the family of tilted CHSH inequalities is the following:

\begin{definition}[The family of Bell operators]\label{def: bell-operators-tilted}
Each inequality in the family is specified by:
\begin{itemize}
\item[(i)] $0 < c_i  < 1 \in \mathbb{R}$, $i=0,..,d-1$, with $\sum_{i=0}^{d-1} c_i^2 = 1$,
\item[(ii)] $d\geq 2 \in \mathbb{N}$
\end{itemize}
Let $\theta_m = \arctan{\frac{c_{2m+1}}{c_{2m}}}$, $\alpha_m \equiv \alpha_m(\theta_m) \in [0,2)$ be defined by $\sin 2\theta_m = \sqrt{\frac{4-\alpha_m^2}{4+\alpha_m^2}}$, $\theta'_m = \arctan{\frac{c_{2m+2}}{c_{2m+1}}}$, $\alpha'_m \equiv \alpha'_m(\theta'_m) \in [0,2)$ defined by $\sin 2\theta'_m = \sqrt{\frac{4-\alpha_m^{\prime 2}}{4+\alpha_m^{\prime 2}}}$. Let $\delta>0$ be a constant. For a correlation $p \in \mathcal{C}_q^{3,4,d,d}$, the Bell operator takes the form:
\begin{align}
\label{eq: tilted Bell operator}
[\tB(c_0,..,c_{d-1})]_p :=&\,\,   \sum_{m=0}^{\floor{\frac{d}{2}}-1} \frac{1}{I_{\alpha_m}}  [\textnormal{tCHSH}_m(\alpha_m)]_p  + \mathds{1}_{\{d > 2\}}\cdot \sum_{m=0}^{\floor{\frac{d}{2}}-1} \frac{1}{I_{\alpha'_m}}[\textnormal{tCHSH}'_m(\alpha'_m)]_p - \delta \cdot ([\textnormal{CROSS}]_p + [\textnormal{CROSS}']_p)
 \nonumber \\ 
 &+ \mathds{1}_{\{d\,\,\small{ odd}\}} \cdot \frac{1}{4} \cdot \left(\sum_{x,y \in \{0,1\}} p(d-1, d-1| x,y) + \sum_{x \in \{0,2\}, y \in \{2,3\}} p(0, 0 | x,y) \right)
\end{align}
\end{definition}

Note that to put the Bell operator for the maximally entangled case in this form one just needs to divide \eqref{eq: max bell operator} by $2\sqrt{2}$.

\begin{conjecture}[Quantum bound and self-testing]
For any $d$ even and any $p \in \mathcal{C}_q^{3,4,d,d}$:
$$[\tB(c_0,..,c_{d-1})]_p \leq 1 + \mathds{1}_{\{d>2\}}$$
Moreover, there is a unique quantum correlation achieving the bound, and it self-tests the state $\ket{\Psi} = \sum_{i=0}^{d-1} c_i \ket{ii}$.
\end{conjecture}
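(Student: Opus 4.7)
The plan is to mirror the four-step template used for Theorems \ref{thm: quantum-bound} and \ref{thm: exact self-testing}. First, I would verify that the bound is attained by the correlation from $\ket{\Psi} = \sum_{i=0}^{d-1} c_i \ket{ii}$ together with the standard tilted-CHSH ideal measurements \cite{Acin12} applied within each unprimed block $\{2m, 2m+1\}$ (with parameter $\alpha_m$) and each primed block $\{2m+1, 2m+2\}$, indices mod $d$ (with parameter $\alpha'_m$). Setting $w_m = c_{2m}^2 + c_{2m+1}^2$, the restriction of $\ket{\Psi}$ to the unprimed block $m$ is $\sqrt{w_m}\,(\cos\theta_m\ket{2m,2m}+\sin\theta_m\ket{2m+1,2m+1})$, which is exactly the tilted-CHSH self-testing state at $\alpha_m$, so $[\mathrm{tCHSH}_m(\alpha_m)]_p = w_m\, I_{\alpha_m}$. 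Rescaling by $1/I_{\alpha_m}$ and summing gives $\sum_m w_m = 1$ for the unprimed contribution, and an analogous identity for the primed blocks yields the claimed bound $1 + \mathds{1}_{\{d>2\}}$.

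For the upper bound I would follow the three-step template of Lemmas \ref{lem: key}--\ref{lem: weights} and Corollary \ref{cor: 1}: (i) show that any maximizer has zero mass on $\mathcal{C} \cup \mathcal{C}'$; (ii) invoke Lemma \ref{lem: weights}, whose proof is independent of the specific Bell operator and carries over verbatim, yielding block weights $w_m, w'_m$ and sub-correlations $p_m, p_m'$ with $p$ block-diagonal; (iii) use the identity $[\mathrm{tCHSH}_m(\alpha_m)]_p = w_m\cdot [\mathrm{tCHSH}(\alpha_m)]_{p_m}$ together with the tilted-CHSH Tsirelson bound $[\mathrm{tCHSH}(\alpha_m)]_{p_m} \leq I_{\alpha_m}$ to conclude that the unprimed contribution is at most $\sum_m w_m = 1$, with the primed contribution treated identically.

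For self-testing, equality in the block-Tsirelson bound for each $w_m > 0$ (and separately $w'_m > 0$), combined with the known self-testing property of tilted CHSH \cite{Acin12}, pins down $p_m$ as the ideal tilted-CHSH correlation at angle $\theta_m$, and similarly for $p_m'$. The unprimed block $m$ and the primed block $m-1$ (indices mod $d$) both cover the outcome $2m$ under the shared question $x=0$, yielding the consistency relation $w_m\cos^2\theta_m = w'_{m-1}\sin^2\theta'_{m-1}$, along with the analogous relation for outcome $2m+1$; together with the normalizations $\sum_m w_m = \sum_m w'_m = 1$, this forces $w_m = c_{2m}^2+c_{2m+1}^2$ and $w'_m = c_{2m+1}^2+c_{2m+2}^2$, recovering the unique self-tested state $\ket{\Psi} = \sum_i c_i\ket{ii}$ up to local isometry.

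The principal obstacle is step (i), the tilted analogue of Lemma \ref{lem: key}. In the untilted case one collapses the $d$ outcomes to $\{0,1\}$ via a per-block parity with orientations $\vec{o}\in\{0,1\}^{d/2-1}$ chosen so that the cross-term contribution is non-negative; because standard CHSH is an XOR game, the within-block contributions are preserved exactly, producing a qubit CHSH strategy that violates $2\sqrt{2}$. In the tilted setting the within-block terms $[\mathrm{tCHSH}_m(\alpha_m)]_p/I_{\alpha_m}$ are governed by different $\alpha_m$, so the collapse does not yield a single tilted-CHSH problem against which one can invoke Tsirelson. Two plausible routes are: (a) prove the operator inequality $(1+\mathds{1}_{\{d>2\}})\mathds{1} - \tB \succeq 0$ directly via a sum-of-squares decomposition generalizing the SOS proof of tilted CHSH, which would additionally give robust self-testing; (b) an averaging argument over $\vec{o}$ (and possibly $\alpha$) producing a single tilted-CHSH strategy whose rescaled value exceeds $\sum_m \frac{1}{I_{\alpha_m}}[\mathrm{tCHSH}_m(\alpha_m)]_p$ modulo cross-term corrections that can be made non-positive. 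The loss of the XOR symmetry seems to make this step genuinely harder, consistent with the authors leaving the statement as a conjecture.
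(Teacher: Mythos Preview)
The statement you are addressing is presented in the paper as a \emph{conjecture}, not a theorem; the paper gives no proof. Immediately after stating it, the authors write: ``The lack of symmetry in the tilted case seems to make the analysis surprisingly less straightforward, and the arguments we employed in the maximally entangled case do not directly carry over.'' There is therefore no paper proof to compare against.

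That said, your outline is an accurate reconstruction of what \emph{would} be needed. Steps (ii)--(iii) and the self-testing argument are sound: Lemma~\ref{lem: weights} depends only on the zero-cross-term hypothesis and transfers verbatim; the block identity $[\mathrm{tCHSH}_m(\alpha_m)]_p = w_m\cdot[\mathrm{tCHSH}(\alpha_m)]_{p_m}$ together with the tilted Tsirelson bound yields the claimed upper bound; and the consistency relations through the shared question $x=0$ do pin down the weights uniquely. (The degenerate case $w_m=0$ is excluded: since all $c_i>0$ one has $\theta_m,\theta'_m\in(0,\pi/2)$, so a single zero weight would force every weight to vanish via the chain of relations, contradicting $\sum_m w_m=1$.)

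You have also correctly located the genuine obstruction at step~(i), the tilted analogue of Lemma~\ref{lem: key}. The collapse-to-qubits argument in the untilted proof relies on two features that fail here: standard CHSH is an XOR game, so the per-block orientation flip $\vec{o}[m]\mapsto 1-\vec{o}[m]$ exactly negates the cross-term contribution while preserving the in-block value; and all blocks share the common Tsirelson constant $2\sqrt{2}$, so the collapsed strategy is tested against a single inequality. In the tilted case the blocks carry distinct $\alpha_m$ and the tilt term $\alpha_m[p(2m|0)-p(2m+1|0)]$ is not XOR-symmetric, so neither ingredient survives. Your suggested routes (an SOS decomposition of $(1+\mathds{1}_{\{d>2\}})\Id - \tB$, or a weighted averaging over $\vec{o}$ and $\alpha$) are plausible but not executed, and the paper gives no indication that either succeeds. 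In short, your proposal is a correct diagnosis of why the statement remains open, not a proof of it.
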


The lack of symmetry in the tilted case seems to make the analysis surprisingly less straightforward, and the arguments we employed in the maximally entangled case do not directly carry over. 

An open question that applies to both the maximally entangled and the tilted Bell operators is to determine if cross terms are necessary for the self-testing property to hold true (i.e. whether, in \eqref{eq: max bell operator} and \eqref{eq: tilted Bell operator},  $\delta>0$ is necessary or $\delta=0$ suffices). 

\section{Acknowledgements}
The author thanks Koon Tong Goh and Thomas Vidick for helpful discussions, and thanks the latter for useful comments on an earlier version of this work. The author appreciates support from the Kortschak Scholars program, and AFOSR YIP award number FA9550-16-1-0495.

\bibliographystyle{alpha}
\bibliography{references}

\begin{appendix}
\section{Appendix}
\label{sec: appendix}
\subsection{Robustness}

Obtaining a robust self-testing result is mainly a matter of going through the proof and replacing exact statements with approximate statements, where necessary. Here, we first state the robust self-testing theorem, then we give an outline of the proof pointing out the parts where it differs from the proof for the exact case.

\begin{theorem}[Robust self-testing]
\label{thm: robust self test}
Let $\mathcal{B}$ be the Bell operator from Definition $\ref{def: bell-operators}$ with parameters $d \geq 2, \delta >0$. Let $\left(\ket{\Psi}, \{\tilde{\Pi}_{A_x}^{a}\}_{a}, \{\tilde{\Pi}_{B_y}^{b}\}_{b} \right)$ be the ideal strategy from Lemma \ref{lem: ideal corr}, where $\ket{\Psi} = \frac{1}{\sqrt{d}}\sum_{i=0}^{d-1} \ket{ii}$. There exists a constant $C>0$ such that the following holds. Suppose the strategy $\left(\ket{\psi}, \{\Pi_{A_x}^a\}_a, \{\Pi_{B_y}^b\}_b \right)$ attains a correlation $p$ such that $[\mathcal{B}]_p > 2\sqrt{2} -\epsilon$, for some $\epsilon < \frac{C}{d^3}$. Then, there exists a local unitary $\Phi$ and an auxiliary state $\ket{aux}$ such that 
\begin{align}
    \| \Phi(\ket{\psi}) -  \ket{\Psi} \otimes \ket{aux}  \| &= O(d^6\epsilon^{\frac18}) \\
     \| \Phi(\Pi_{A_x}^a \otimes \Pi_{B_y}^b \ket{\psi}) -  \tilde{\Pi}_{A_x}^a \otimes \tilde{\Pi}_{B_y}^b \ket{\Psi} \otimes \ket{aux}  \| &= O(d^6\epsilon^{\frac18}).
\end{align}
\end{theorem}

In the rest of this section we sketch the proof of Theorem \ref{thm: robust self test}. In doing so, we will state approximate versions of Lemma \ref{lem: key} and \ref{lem: weights} from the main text.

\begin{lem}[Approximate version of Lemma \ref{lem: key}]\label{lem: key approx}
Let $\mathcal{B}$ be the Bell operator with parameters $d \geq 2$ and $\delta >0$. Let $p \in \mathcal{C}_q^{3,4,d,d}$ be a quantum correlation such that $[\mathcal{B}]_p > 2\sqrt{2} -\epsilon$. Then, $p(a,b|x,y) <\frac{2}{\delta} \epsilon$ for all $(a,b,x,y) \in \mathcal{C} \cup \mathcal{C}'$, where $\mathcal{C}$ and $\mathcal{C}'$ are as in equations \eqref{eq: C} and \eqref{eq: C prime}.
\end{lem}
\begin{proof}
The proof is very similar to the proof of Lemma \ref{lem: key}. The only difference is that we now suppose for a contradiction that $p$ is such that $[\mathcal{B}]_p > 2\sqrt{2} -\epsilon$ and $p(a,b|x,y) \geq \frac{2}{\delta} \epsilon$ for some $(a,b,x,y) \in \mathcal{C} \cup \mathcal{C}'$. Then in order to compensate for a negative contribution $\geq \delta \cdot \frac{2}{\delta} \epsilon = 2\epsilon$, it must be that either  $\sum_{m=0}^{\frac{d}{2}-1} [\textnormal{CHSH}_m]_p > 2\sqrt{2}$ or $\sum_{m=0}^{\frac{d}{2}-1} [\textnormal{CHSH}'_m]_p > 2\sqrt{2}$. In either case, analogously to the proof of Lemma \ref{lem: key}, one can reduce this to a strategy that wins CHSH with value $>2\sqrt{2}$.
\end{proof}

\begin{lem}[Approximate version of Lemma \ref{lem: weights}]
\label{lem: weights approx}
Any correlation $p \in \mathcal{C}_q^{3,4,d,d}$ such that each cross term has size $O(\epsilon)$ (i.e. of the form of Lemma \ref{lem: key approx} - we are thinking of $\delta$ as a constant), induced by some strategy $\left(\ket{\psi}, \{\Pi_{A_x}^a\}_a, \{\Pi_{B_y}^b\}_b \right)$, satisfies the following:

\begin{itemize}
    \item If $d$ is even, then for each $m=0,..,\frac{d}{2}-1$, there exist weights $w_m, w_m' \geq 0$ with $1- O(\epsilon) \leq \sum_m w_m, \sum_m w'_m  \leq 1$, and correlations $p_m, p_m' \in \mathcal{C}_q^{2,2,2,2}$ (with questions in $\{0,1\}^2$ and $\{0,2\}\times \{2,3\}$ respectively, and answers in $\{0,1\}$) such that $\forall m$, $\forall a,b \in \{2m,2m+1\}, x,y \in \{0,1\}$: $$p(a,b|x,y) \approx_{O(\epsilon)} w_m \cdot p_m(a\mod 2,b\mod 2|x,y) \approx_{O(\epsilon)} \bra{\psi}\Pi_{A_x}^a\otimes \Pi_{B_y}^b \ket{\psi}$$

and $\forall m$, $\forall a,b \in \{2m+1,2m+2\}, x \in \{0,2\}, y \in \{2,3\}$: $$p(a,b|x,y) \approx_{O(\epsilon)} w_m' \cdot p_m'(a \mod 2,b \mod 2|x,y) \approx_{O(\epsilon)} \bra{\psi}\Pi_{A_x}^a\otimes \Pi_{B_y}^b \ket{\psi}$$.

\item If $d$ is odd, the analogous statement holds, except that the weights $w_m, w_m'$ are such that $1-O(\epsilon) \leq \sum_m w_m + p(d-1,d-1|0,0), \sum_m w'_m + p(0,0|2,2) \leq 1 + O(\epsilon)$, AND 
\begin{itemize}
    \item $p(d-1,d-1|x,y) \approx_{O(\epsilon)} p(d-1,d-1|x',y') \,\, \forall x,y,x'y' \in \{0,1\}$
    \item $p(0,0|x,y) \approx_{O(\epsilon)} p(0,0|x',y') \,\, \forall x,x' \in \{0,2\}, y,y' \in \{2,3\}$
\end{itemize}

\end{itemize}
\end{lem}

\begin{proof}
All equalities from \eqref{eq: 12} to \eqref{eq: 14} now hold approximately, up to addition of orthogonal vectors of norm  $O(\sqrt{\epsilon})$. The weights $w_m$ are defined in the same way as in the proof of Lemma \ref{lem: weights}. The main difference is that now correlation $p_m$ is defined to be any correlation such that \\$w_m \cdot p_m(a\mod 2,b\mod 2|x,y) \approx_{O(\epsilon)}\bra{\psi}\Pi_{A_x}^a\otimes \Pi_{B_y}^b \ket{\psi}$ for all $a,b \in \{2m,2m+1\}, x,y \in \{0,1\}$ (note that with an exact equality $p_m$ would not be a well-defined correlation, but an $O(\epsilon)$ correction is enough for existence of such a correlation $p_m$). We argue similarly for $w'_m$ and $p_m'$. The case of $d$ odd is also similar. 
\end{proof}





\begin{proof}[Proof of Theorem \ref{thm: robust self test}]
We look at the case of $d$ even first. Using Lemma \ref{lem: weights approx}, we deduce
\begin{align}
\sum_{m=0}^{\frac{d}{2}-1} [\textnormal{CHSH}_m]_p \approx_{O(d\epsilon)} \sum_{m=0}^{\frac{d}{2}-1} w_m \cdot [\textnormal{CHSH}]_{p_m} &\leq 2\sqrt{2} \label{eq: 23} \\
\sum_{m=0}^{\frac{d}{2}-1}[\textnormal{CHSH}'_m]_p \approx_{O(d\epsilon)} \sum_{m=0}^{\frac{d}{2}-1} w'_m \cdot [\textnormal{CHSH}]_{p_m} &\leq 2\sqrt{2} \label{eq: 24}
\end{align}
Hence, 
\begin{align}
\sum_{m=0}^{\frac{d}{2}-1} [\textnormal{CHSH}_m]_p &\leq 2\sqrt{2} + O(d\epsilon) \label{eq: 25} \\
\sum_{m=0}^{\frac{d}{2}-1}[\textnormal{CHSH}'_m]_p & \leq 2\sqrt{2} + O(d\epsilon) \label{eq: 26}
\end{align}
It is straightforward to see that \eqref{eq: 25} and \eqref{eq: 26} imply the existence of constants $C', C''>0$ such that
\begin{itemize}
    \item for all $m$,
    $w_m \leq C' \sqrt{d\epsilon}$ OR $[\textnormal{CHSH}]_{p_m} \geq 2\sqrt{2} - C''\sqrt{d\epsilon}$, AND
    \item for all $m$,
    $w'_m \leq C' \sqrt{\epsilon}$ OR $[\textnormal{CHSH}]_{p'_m} \geq 2\sqrt{2} - C''\sqrt{\epsilon}$.
\end{itemize}
From here, we deduce approximate equations like $\|\Pi_{A_0}^{2m+1} \ket{\psi}\|^2 \approx_{O(\sqrt{d\epsilon})} w_{m} \cdot \frac12$ and $\|\Pi_{A_0}^{2m+1} \ket{\psi}\|^2 \approx_{O(\sqrt{d\epsilon})} w'_m \cdot \frac12$, and similar other equations as in the proof of Theorem \ref{thm: exact self-testing}. These follow from a robust self-testing bound on CHSH. Now, such approximate equations imply, by applying triangle inequalities, that, for all $m$, 
\begin{equation}
\label{eq: ws}
    w_m \approx_{O(d^{3/2}\sqrt{\epsilon})} w'_m \approx_{O(d^{3/2}\sqrt{\epsilon})} \frac{2}{d}
\end{equation}
It is clear that there exists a constant $C>0$ such that for $\epsilon < \frac{C}{d^3}$, it must be that for all $m$ $w_m > C' \sqrt{d\epsilon}$ and $w'_m > C' \sqrt{d\epsilon}$. Hence, for $\epsilon < \frac{C}{d^3}$, it is the case that, for all $m$,
\begin{equation}
\label{eq: bound}
  [\textnormal{CHSH}]_{p_m}, [\textnormal{CHSH}]_{p'_m} \geq 2\sqrt{2} - C''\sqrt{d\epsilon}.
\end{equation}
Finally, recall the form of correlation $p$ from Lemma \ref{lem: weights approx}. This, combined with \eqref{eq: ws} and \eqref{eq: bound} implies that $p$ is $O(d^2\sqrt{\epsilon})$-close to the ideal correlation $\tilde{p}$ defined by the measurements of Lemma \ref{lem: ideal corr} (i.e. for all $a,b,x,y,$ $p(a,b|x,y) \approx_{O(d^2\sqrt{\epsilon})} \tilde{p}(a,b,|x,y)$, where $\tilde{p}$ is the ideal correlation). The robust self-testing statement from \cite{CS2017b} for the ideal correlation of Lemma \ref{lem: ideal corr} states that a strategy producing a correlation that is $\epsilon$-close to ideal, must be $O(d^3\epsilon^{\frac14})$-close (in the sense of Theorem \ref{thm: robust self test}) to the ideal strategy. Applying this to our analysis yields the conclusion of Theorem \ref{thm: robust self test}.

The case of $d$ odd is handled similarly.

\end{proof}

\end{appendix}

\end{document}